\newtheorem{theorem}{Theorem}[section]
\def\@biblabel#1{}
\theoremstyle{plain}
\theoremstyle{definition}
\newtheorem{example}{Example}%[section]
\theoremstyle{remark}
\title{On several properties of uniformly optimal search plans}
\author{Liang Hong\footnote{Liang Hong is an Associate Professor in the Department of Mathematical Sciences,  The University of Texas at Dallas, 800 West Campbell Road, Richardson, TX 75080, USA. Tel.:~972-883-2161. Email address: liang.hong@utdallas.edu.}}
\date{\today}
\begin{document}

\maketitle

\begin{abstract}
The uniformly optimal search plan is a cornerstone of the optimal search theory.  It is well-known that when the target distribution is circular normal and the detection function is exponential, the uniformly search plan has several desirable properties.  This article establishes that these properties hold for any continuous target distribution.  Since there is no true target distribution, our results provide useful information to the search team when they need to choose a non-circular normal target distribution in a  real-world search  mission.

\smallskip

\emph{Keywords and phrases:} Bayesian analysis; explainable analytics; naval operations research; maritime search and rescue; stationary target.
\end{abstract}

\section{Introduction}

The theory of optimal search for a stationary target was first developed by Koopman (1946,  1956a, b, c) and later refined by several authors such as Stone and Stanshine (1971) and Stone (1973, 1975, 1976).  It has been successfully applied in both civil and military search missions; see, for instance,  Richardson and Stone (1971), Richardson et al. (1980), Stone (1992), Kratzke et al. (2010), and Stone et al. (2014).  While recent research mainly focuses on moving targets (e.g.  Washburn 2014, Stone et al.  2016), the optimal search theory for a stationary target is still developing, as evidenced by Kadane (2015) and Clarkson et al. (2020).  More importantly, its immediate or potential value never fades over time in view of several high-profile civil and military accidents in the last decade,  such as the vanishing of AF 447 on June 1, 2009, the disappearance of Malaysia Airlines Flight 370 in 2014, the loss of Indonesia's KRI Nanggala submarine in April 2021,  and the collision of the USS Connecticut with a seamount  in South China Sea in 2021.  In addition,  the optimal search theory for moving targets is based upon its counterpart for stationary targets.  Thus, the optimal search theory for stationary targets is still relevant and important for both theory and practice.

Each search has a budget.  Koopman (1956c) first investigated maximizing the probability of finding the target once the budget has been exhausted.  In the extant literature, the uniformly optimal search plan plays a major role in the optimal search theory.  A uniformly optimal search plan is desirable because (a) it maximizes the amount of budget that could be saved (by making the search more likely to end early) and (b) it will still maximize the probability of detection if the budget is unexpectedly cut halfway through the search.  Arkin (1964) first derived sufficient conditions for existence of a uniformly optimal search policy in the Euclidean search space $\mathbb{R}^n$.  This work was extended and generalized by other authors, such as Stone (1973, 1975, 1976).

%In the extant literature, the uniformly optimal search plan plays a major role in the optimal search theory.  A uniformly optimal search plan exists under mild conditions (e.g. Stone 1973, 1975, 1976).   
In addition, it is known that when the target distribution, which encodes belief about the target's location, is circular normal (i.e., a bivariate normal distribution with circular symmetry),  the detection function is exponential,  and the cost is proportional to effort, the uniformly optimal search plan has some appealing properties; see Section~2 for more details.  Under the assumption that the cost is proportional to allocation, this article shows that these properties hold for an arbitrary continuous target distribution when the detection function is exponential (some of these properties also hold when the detection function is regular). It is important to note that a major challenge in a search problem is the uncertainty of the target location.  A target distribution is chosen to account for this uncertainty in terms of probabilities.  Therefore, the target distribution is always subjective and there is no true target distribution.  

As pointed out in Soza and USCG (1996), the circular normal distribution has been widely employed as the target distribution in the theory of optimal search. This seems to result from the mathematical convenience it provides; that is,  many key quantities in the prior literature will admit a closed-form formula if the target distribution is taken to be circular normal. However, the target distribution is subjective, and there is no reason, theoretically or practically, to restrict our choices to the circular normal distribution.  The search team should honestly encode its knowledge and expertise into a target distribution and proceed from there.  Choosing a circular normal target distribution for mathematical convenience  will reduce the efficiency of the search as well as the likelihood of success.  For example, if the location a ship sank is known, then a circular normal distribution centered at that point could be an appropriate starting choice.  But if the current is known to drift in a certain complex pattern, then one must incorporate this prior information into the target distribution by skewing the circular normal. Hence, it is crucial to investigate whether the aforementioned  properties will still hold when the target distribution is an arbitrary continuous one.  Therefore,  our results offer important insights for any search team when they need to employ a non-circular normal target distribution in practice.  %Some general properties of a uniformly optimal search plan are documented in Stone (1973, 1975, 1976),  but they are different from the ones we aim to establish in this article.

%A search team must pick a target distribution that most closely reflects the prior information as well as their professional judgment.  In a real-world search mission, the difference between a soundly chosen target distribution and a default one (such as the circular normal distribution) could mean the difference between search success and search failure,  saving many lives and losing many lives,  or maintaining technological predominance and handing important military technology to enemies.  

The remainder of the article is organized as follows.  In Section~2, we set the scene  by reviewing the optimal search model and the uniformly optimal search plan.   Next,  in Section~3, we establish four general properties of the uniformly optimal search plan.  In Section~4  we provide a numerical example to illustrate our key results.   Finally,  we conclude the article with some remarks in Section~5.

\section{Model setup and notation}
We consider the problem of searching for a stationary target where the exact target location $x$ is unknown.  The Bayesian approach naturally fits here. Specifically, based on available information and professional judgment, we construct a \emph{target distribution} for $x$.  Let $\Pi$ be the (cumulative) distribution function of the target distribution. The support of $\Pi$ is often called the \emph{possibility area} and is denoted as $X$; it is the area, in our assessment,  which contains the target.  Mathematically, $X$ is a subset of the $n$-dimensional Euclidean space $\mathbb{R}^n$.  In most real-world applications,  we take $n=2$.  Intuitively, $\Pi$ reflects our knowledge prior to the search and, at the same time, accounts for our uncertainty about the target location.  Throughout,  we assume the target is stationary and we only consider the case where the target distribution is continuous with a  probability density function $\pi$.

To plan a search, we must decide how to distribute the effort in $X$.  To this end, we  let $\mathbb{R}_+\equiv [0, \infty)$ and call a function $f: X\rightarrow \mathbb{R}_+$ an \emph{allocation} on $X$ if $\int_A f(x)dx$ equals the amount of effort put in $A$ where $A$ is any subset of $X$.  Since detection is rarely perfect in any search,  we define the \emph{detection function} $d:X\times \mathbb{R}_+\rightarrow [0, 1]$ to be a function such that $d(x, y)$ denotes the \emph{conditional} probability that the target is detected if the allocation density equals $y$ at $x$ given that the target is located at $x$.  Given an allocation $f$ on $X$, the \emph{probability of detection}, denoted as $P[f]$, is given by 
\[P[f]=\int_X  d(x, f(x)) \pi(x) dx.\]
The \emph{cost function} $c(x, y): X\times \mathbb{R}_+\rightarrow \mathbb{R}_+$ stands for the cost density of applying allocation density $y$ at $x$.  For a given allocation $f$ on $X$, the cost resulting from $f$, denoted as $C[f]$, is given by 
\[C[f]=\int_Xc(x, f(x))dx.\]
We  assume $c(x, y)=y$ throughout this paper.  We define a \emph{search plan} on $X$ to be a function $\varphi: X\times \mathbb{R}_+\rightarrow \mathbb{R}_+$ such that 
\begin{enumerate}
\item[(i)]$\varphi(\cdot, t)$ is an allocation on $X$ for all $t\geq 0$;
\item[(ii)]$\varphi(x, \cdot)$ is an increasing function for all $x\in X$.
\end{enumerate}
Property~(i) says that $\varphi(\cdot, t)$ is an allocation in the first $t$ time units; Property~(ii) ensures that the allocations must ``build on each other'' as time progresses (i.e., efforts cannot be ``undone'').

In practice, the budget for conducting a search is always limited.  For a given time $T>0$, let $E(T)>0$ be the search budget available by time $T$. That is, $E(T)$ is the effort available by $T$.  Throughout, we assume the search is conducted at  speed $v$ using a sensor with a sweep width $W$ and the time available for the search is $T>0$. Hence,  $E(T)=WvT$. Intuitively,  the ``best'' search plan $\varphi^\star$, if it exists, should be the one that maximizes the probability of detection at each time $t\geq 0$, subject to the constraint $C[\varphi^\star]\leq E(T)$. In the current literature, such a search plan is called a \emph{uniformly optimal search plan}.  Precisely, we define the \emph{cumulative effort function} $E$ to be a non-negative function with domain $\mathbb{R}_+$ where $E(t)$ symbolizes the effort available by time $t$.  We always assume that $E$ is increasing.  Let $\Phi(E)$ denote the class  of search plans $\varphi$ such that 
\[\int_X\varphi(x, t)dx=E(t), \quad \text{for all $t\geq 0$}.\]
A search plan $\varphi^\star\in \Phi(E)$ is said to be \emph{uniformly optimal with $\Phi(E)$} if 
\[P[\varphi^\star(\cdot, t)]=\max\{P[\varphi(\cdot, t)]\mid \varphi\in \Phi(E)\}  \quad \text{for all $t\geq 0$}.\]
Example~2.2.9 in Stone (1975) shows that a uniformly optimal search plan does not always exist. However,  Stone (1976) establishes that there is a uniformly optimal plan if the detection function $d(x, \cdot)$ is increasing and right-continuous for each $x\in X$.  Based on the key results in Everett (1963),  a semi-closed form of the uniformly optimal plan can be derived under mild conditions using the Lagrange multiplier; see Chapter~2 of Stone (1975) or Chapter~5 of Washburn (2014) for details.  In particular, we have the following theorem. The detection function is said to be $\emph{regular}$ if $d(x, 0)=0$ and $\partial d(x, y)/\partial y$ is continuous, positive, and strictly decreasing for all $x\in X$.

\begin{theorem}[Stone 1975]
\label{thm:2.1}
If the cost function takes the form $c(x,y)=y$ for all $y\geq 0$ and $x\in X$ and the detection function is regular,  a uniformly optimal search plan $\varphi^\star$ for budget $E(T)$ can be found for any target distribution as follows.  Put
\[q_x (y)=\pi(x)\frac{\partial }{\partial y}d(x, y), \quad x\in X \text{ and $y\geq 0$},\]

\[q^{-1}_x(\lambda)=\left\{
		                           \begin{array}{ll}
		                           \text{the inverse function of $q_x(y)$  evaluated at $\lambda$},& \hbox{if $0<\lambda\leq q_x(0)$,} \\
					0, & \hbox{if $\lambda>q_x(0)$,} 
		                          \end{array}
		                         \right.\]
and 
\[Q(\lambda)=\int_X q_x^{-1}(\lambda)dx, \quad x\in X.\]
Then a uniformly optimal search plan for budget $B$ is given by $\varphi^\star(x, T)=q_x^{-1}(Q^{-1}(E(T)))$ where $Q^{-1}$ is the inverse function of $Q$. 
\end{theorem}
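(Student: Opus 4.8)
The plan is to reconstruct the construction through the Lagrangian (Everett) method, and then verify directly that the candidate $\varphi^\star$ satisfies the two defining properties of a search plan in $\Phi(E)$ together with the uniform-optimality condition. First I would fix a multiplier $\lambda>0$ and form the Lagrangian functional
\[
L_\lambda[f]=P[f]-\lambda\,C[f]=\int_X\bigl(d(x,f(x))\,\pi(x)-\lambda f(x)\bigr)\,dx,
\]
to be maximized over all allocations $f$. Because the integrand decouples across $x$, this reduces to the pointwise maximization of $h_\lambda(x,y)=d(x,y)\pi(x)-\lambda y$ over $y\geq 0$ for each fixed $x$. Here the regularity hypothesis does the essential work: since $\partial h_\lambda/\partial y=q_x(y)-\lambda$ and $q_x(y)=\pi(x)\,\partial d(x,y)/\partial y$ is continuous, positive, and strictly decreasing, the map $h_\lambda(x,\cdot)$ is strictly concave. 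Hence its maximizer is the unique interior stationary point solving $q_x(y)=\lambda$, namely $y=q_x^{-1}(\lambda)$, whenever $0<\lambda\leq q_x(0)$, and is the boundary point $y=0$ otherwise. These are exactly the two branches in the definition of $q_x^{-1}$, so $f_\lambda(x):=q_x^{-1}(\lambda)$ maximizes $L_\lambda$.

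Next I would invoke the Lagrangian sufficiency (Everett) principle to pass from this unconstrained optimum to the constrained one. For any competing allocation $f$ with $C[f]\leq C[f_\lambda]$, the inequality $L_\lambda[f]\leq L_\lambda[f_\lambda]$ rearranges to
\[
P[f]-P[f_\lambda]\leq \lambda\bigl(C[f]-C[f_\lambda]\bigr)\leq 0,
\]
so $f_\lambda$ maximizes $P[\cdot]$ over the budget set $\{f:C[f]\leq C[f_\lambda]\}$. It then remains only to tune $\lambda$ to the prescribed budget. Since $C[f_\lambda]=\int_X q_x^{-1}(\lambda)\,dx=Q(\lambda)$, I would set $Q(\lambda)=E(t)$; this is solvable because $Q$ is continuous and strictly decreasing in $\lambda$ (each $q_x^{-1}(\cdot)$ is strictly decreasing where positive), with $Q(\lambda)=0$ once $\lambda>\sup_x q_x(0)$ and $Q(\lambda)$ growing large as $\lambda\downarrow 0$. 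This yields $\lambda=Q^{-1}(E(t))$ and hence the stated form $\varphi^\star(x,t)=q_x^{-1}(Q^{-1}(E(t)))$.

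Finally I would check that $\varphi^\star$ is a genuine element of $\Phi(E)$ and is uniformly optimal. Nonnegativity is immediate from $q_x^{-1}\geq 0$, and the budget identity $\int_X\varphi^\star(x,t)\,dx=E(t)$ holds by the choice of $\lambda$, giving property~(i). For property~(ii) I would simply track monotonicities: $E$ is increasing, $Q^{-1}$ is decreasing (being the inverse of the decreasing $Q$), and $q_x^{-1}(\cdot)$ is decreasing, so the composition $t\mapsto q_x^{-1}(Q^{-1}(E(t)))$ is increasing for each fixed $x$. Uniform optimality then follows by applying the constrained-optimality conclusion above at every $t\geq 0$ simultaneously, using the single budget function $E$: at each $t$, $\varphi^\star(\cdot,t)$ attains the maximum of $P$ over $\{f:C[f]\leq E(t)\}$ and in particular over the slices of $\Phi(E)$.

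The step I expect to be the main obstacle is not the concavity argument, which regularity makes mechanical, but the technical bookkeeping that surrounds it: justifying that pointwise maximization of $h_\lambda$ actually produces a maximizer of the integral functional $L_\lambda$ (a measurable-selection and interchange-of-supremum-and-integral argument), and ensuring that $Q$ is finite, continuous, strictly decreasing, and has $E(t)$ in its range so that $Q^{-1}(E(t))$ is well defined. These integrability and range requirements are precisely the ``mild conditions'' under which the semi-closed form is asserted to hold, and I would state them explicitly rather than grind through the measure-theoretic details.
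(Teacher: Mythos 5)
Your proposal is correct and takes essentially the same route the paper relies on: the paper does not prove Theorem~\ref{thm:2.1} itself but cites the Lagrange-multiplier/Everett (1963) construction from Stone (1975, Ch.~2) and Washburn (2014, Ch.~5), which is precisely what you reconstruct (pointwise maximization of the Lagrangian using regularity, Everett's sufficiency principle, then tuning $\lambda$ so that $Q(\lambda)=E(t)$). The integrability and range-of-$Q$ caveats you flag at the end are exactly the ``mild conditions'' the paper leaves implicit when invoking the cited result.
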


Here $q_x(y)$ takes into account the future change in detection and the prior probability.  Intuitively,  it represents the ``attractiveness'' of allocating more effort to the location $x$ given that $y$ amount of effort has already been allocated there, and $q^{-1}_x(\lambda)$ is the amount of allocation already required at $x$ so the attractiveness of more allocation is $\lambda$, or $0$ if the attractiveness is less than $\lambda$ before any allocation at $x$. The idea behind Theorem~\ref{thm:2.1} is as follows: distribute search effort by gradually decreasing the minimum attractiveness of $\lambda$ that is allowed into the allocation,  hence distributing effort is proportional to the most attractive areas at the current moment.  Stop when $\lambda$ falls enough so that the total allocation has reached the budget $E$.  This stopping is monitored by the function $Q$ where $Q(\lambda)$ is the total allocation if every location $x$ in the search space is allocated effort until the attractiveness of $q_x(y)$ of allocating more effort there falls to $\lambda$. 

%As pointed out in Soza and USCG (1996), the circular normal distribution has been widely employed as the target distribution in the theory of optimal search. This seems to result from the mathematical convenience it provides; that is,  many key quantities in the prior literature will admit a closed-form formula if the target distribution is taken to be circular normal. However, the target distribution is subjective, and there is no reason, theoretically or practically, to restrict our choices to the circular normal distribution.  The search team should honestly encode its knowledge and expertise into a target distribution and proceed from there. Choosing a circular normal target distribution for mathematical convenience  will reduce the efficiency of the search as well as the likelihood of success.  Therefore, it is crucial to investigate whether those nice  properties established with a circular normal target distribution will still hold when the target distribution is an arbitrary continuous one.   Some general properties of a uniformly optimal search plan are documented in Stone (1973, 1975, 1976).   They are different from the  ones we aim to establish next.

%
\section{Several general properties of uniformly optimal search plans}
%Henceforth, we assume that the cost function takes the form $c(x, y)=y$. 
We first recall a few known facts (e.g. Stone 1975, Chapter~2).   If the target distribution is circular normal, i.e., 
\[\pi(x_1, x_2)=\frac{1}{2\pi\sigma^2}e^{-\frac{x_1^2+x_2^2}{2\sigma^2}}, \quad (x_1, x_2)\in X= \mathbb{R}^2,\]
and the detection function is exponential, i.e., $d(x, y)=1-e^{-y}$ for all $x\in X$ and $y\geq 0$,  then the uniformly optimal plan exists and takes the form (in polar coordinates)
\[\varphi^\star((r, \theta), T)=\left\{
		                           \begin{array}{ll}
		                           H\sqrt{T}-\frac{r^2}{2\sigma^2},& \hbox{if $0\leq r\leq R(T)$,} \\
					0, & \hbox{if $r>R(T)$,} 
		                          \end{array}
		                         \right.\]
where $R^2(t)=2\sigma^2H\sqrt{t}$ and $H=\sqrt{Wv/\pi\sigma^2}$.  Note that here the allocation at a point depends  only on $r$ (not on $\theta$) and that the area assigned non-zero allocation is a circle whose radius grows with $T$. These two properties result from the fact the target distribution is  circular normal,  but they do not hold for an arbitrary target distribution. 

Moreover,  the probability of detection is given by 
\[P[\varphi^\star(\cdot, T)]=1-(1+H\sqrt{T})e^{-H\sqrt{T}}, \quad T\geq 0.\]
Let $\pi_T$ be the density of the posterior probability distribution for the target location  given that the target has not been found by time $T$. Then 
\[\pi_T((r, \theta))=\left\{
		                           \begin{array}{ll}
		                           \frac{1}{2\pi\sigma^2(1+H\sqrt{T})},& \hbox{if $0\leq r\leq R(T)$,} \\
					\frac{\exp\{-\frac{r^2}{2\sigma^2}+H\sqrt{T}\}}{2\pi\sigma^2(1+H\sqrt{T})}, & \hbox{if $r>R(T)$.} 
		                          \end{array}
		                         \right.\]
It is easy to verify that this piecewise-defined function is continuous.  Also, the definition of $R(T)$ shows that $e^{-\frac{r^2}{2\sigma^2}+H \sqrt{T}}<1$ for $r>R(T)$. Thus, the second piece of $\pi_T$ is no greater than the first piece.

For $t\geq 0$, define the \emph{additional effort density} that accumulates at point $(r, \theta)$ in the time interval $[T, T+t]$ as 
\[\varphi^\star_T((r, \theta), t)=\varphi^\star((r, \theta), T+t)-\varphi^\star((r, \theta), T), \quad x\in X \text{ and $t\geq 0$}.\]
Then 
\[\varphi_T^\star((r, \theta), t)=\left\{
		                           \begin{array}{ll}
		                           H(\sqrt{T+t}-\sqrt{T}),& \hbox{if $0\leq r\leq R(T)$,} \\
					H\sqrt{T+t}-\frac{r^2}{2\sigma^2}, & \hbox{if $R(T)<r\leq R(T+t)$,} \\
					0,& \hbox{if $r>R(T+t)$.}
		                          \end{array}
		                         \right.\]

It is clear that this uniformly optimal search plan has the following nice properties:
\begin{enumerate}
\item[(i)]The detection probability $P[\varphi^\star(x, T)]$ goes to $1$ as time $T\rightarrow \infty$;
\item[(ii)]the posterior density $\pi_T$ for the target location given that the target has not been found by time $T$  is constant inside  a circle of radius $R(T)$ around the origin but not constant outside this circle;
\item[(iii)]the posterior density $\pi_T$ is flattening and spreading as $T$ increases and $\pi_T$ vanishes as $T\rightarrow\infty$; 
\item[(iv)]the additional effort that accumulates in the interval $[T, T+t]$ will be placed uniformly over the area that has been searched by time $T$. 
\end{enumerate}

Do these properties still hold when the target distribution is not circular normal? We give an affirmative answer to each of them.  Intuitively, Property~(i) says that if we keep searching, we will eventually find the target using the uniformly optimal search plan. The next theorem shows that this property holds for an arbitrary continuous target distribution when the detection function is regular.

\begin{theorem}
\label{thm:3.1}
If the target distribution is continuous with a probability density function $\pi$,  the detection function $d$ is regular,  and  $\lim_{T\rightarrow \infty} E(T)=\infty$, then 
$\lim_{T\rightarrow \infty} P[\varphi^\star(\cdot, T)]=1$.
\end{theorem}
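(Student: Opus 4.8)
The plan is to exploit the closed form $\varphi^\star(x,T)=q_x^{-1}(Q^{-1}(E(T)))$ from Theorem~\ref{thm:2.1} and to reduce the claim to a pointwise blow-up of the allocation followed by a dominated-convergence argument. Writing $\lambda_T := Q^{-1}(E(T))$, so that $\varphi^\star(x,T)=q_x^{-1}(\lambda_T)$ and
\[
P[\varphi^\star(\cdot,T)] = \int_X d\bigl(x, q_x^{-1}(\lambda_T)\bigr)\,\pi(x)\,dx,
\]
the whole argument hinges on showing $\lambda_T \to 0$ as $T\to\infty$ and then letting the detection value at each point climb to its supremum.

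First I would establish the monotonicity structure of the auxiliary functions. Since $d$ is regular, $\partial d(x,y)/\partial y$ is positive and strictly decreasing in $y$, so for each fixed $x$ with $\pi(x)>0$ the map $y\mapsto q_x(y)$ is strictly decreasing with $q_x(0)=\pi(x)\,\partial d(x,0)/\partial y>0$; hence $q_x^{-1}$ is well defined and decreasing on $(0,q_x(0)]$. Moreover, because $d(x,\cdot)$ is bounded above by $1$ while $\partial d/\partial y>0$, the derivative must satisfy $\partial d(x,y)/\partial y \to 0$ as $y\to\infty$ (otherwise $d(x,y)=\int_0^y \partial d/\partial s\,ds$ would diverge), which forces $q_x(y)\to 0$ and therefore $q_x^{-1}(\lambda)\to\infty$ as $\lambda\downarrow 0$. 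Consequently $Q(\lambda)=\int_X q_x^{-1}(\lambda)\,dx$ is decreasing in $\lambda$, and by monotone convergence $Q(\lambda)\to\infty$ as $\lambda\downarrow 0$. Since $Q^{-1}$ is the (decreasing) inverse and $E(T)\to\infty$ by hypothesis, it follows that $\lambda_T=Q^{-1}(E(T))\to 0$.

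Next, fix any $x$ with $\pi(x)>0$. For $T$ large enough $\lambda_T<q_x(0)$, so $q_x^{-1}(\lambda_T)$ is the genuine inverse and, by the previous step, $\varphi^\star(x,T)=q_x^{-1}(\lambda_T)\to\infty$. Hence $d(x,\varphi^\star(x,T))\to \lim_{y\to\infty}d(x,y)=1$ for almost every $x$. I would then invoke dominated convergence with the integrable dominating function $\pi$ (legitimate because $0\le d\le 1$) to pass the limit inside the integral and conclude
\[
\lim_{T\to\infty}P[\varphi^\star(\cdot,T)] = \int_X \Bigl(\lim_{T\to\infty} d\bigl(x,\varphi^\star(x,T)\bigr)\Bigr)\pi(x)\,dx = \int_X \pi(x)\,dx = 1.
\]

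The main obstacle, I expect, is the step $d(x,\varphi^\star(x,T))\to 1$: regularity as stated only controls the \emph{derivative} of $d$ and guarantees that $\lim_{y\to\infty}d(x,y)$ exists, but not that it equals $1$ (for instance $\tfrac12(1-e^{-y})$ is regular yet saturates at $\tfrac12$). The conclusion therefore relies on the standard search-theoretic convention that detection becomes certain under unbounded effort, i.e. $\lim_{y\to\infty}d(x,y)=1$ for each $x$, and I would make this assumption explicit. A secondary technical point is verifying that $E(T)$ lies in the range of $Q$ for all large $T$ so that $\lambda_T$ is well defined, together with justifying the interchange of limit and integral in the monotone-convergence step; both are routine once the monotonicity of $Q$ is in hand.
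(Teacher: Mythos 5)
Your proof is correct and follows essentially the same route as the paper's: invoke the closed form $\varphi^\star(x,T)=q_x^{-1}(Q^{-1}(E(T)))$ from Theorem~\ref{thm:2.1}, show $Q^{-1}(E(T))\to 0$ so that the allocation diverges pointwise, and finish by dominated convergence with the bound $0\le d\le 1$. Your caveat that regularity alone does not force $\lim_{y\to\infty}d(x,y)=1$ is well taken---the paper's own proof silently uses this in its final equality---so your explicit flagging of that assumption (and your derivation of $q_x(y)\to 0$ from boundedness of $d$) makes your write-up, if anything, more careful than the original.
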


\begin{proof}
By Theorem~\ref{thm:2.1},  a uniformly optimal search plan $\varphi^\star$ exists and is given by $\varphi^\star(x, T)=q_x^{-1}(Q^{-1}(E(T)))$. Therefore, the probability of detection is given by 
\[
P[\varphi^\star(\cdot, T)] = \int_Xd(x, \varphi^\star(x,T))\pi(x)dx =\int_{X} d(x, q_x^{-1}(Q^{-1}(E(T)))) \pi(x)dx.
\]
Since $Q$ is continuous on $(0,\infty)$ and strictly decreasing on the interval $(0, \sup\{\lambda\mid Q(\lambda)>0\})$, $Q^{-1}$ is continuous and strictly decreasing on $(0, \infty)$. In addition, note that $q_x^{-1}$ is continuous and strictly decreasing on $(0, q_x(0)]$.  Therefore,  $q_x^{-1}(Q^{-1})$ is continuous and strictly increasing on $(0, \infty)$.  Also, it is known that $\lim_{\lambda\rightarrow \infty}Q^{-1}(\lambda)=0$; see, for example, Section~2.2 of Stone (1975).  This, in view of the  assumption that $d$ is regular,  implies that $\lim_{\lambda\rightarrow \infty}q_x^{-1}(Q^{-1}(\lambda))=\infty$.  Since $|d(x, y)|\leq 1$ for all $x\in X$ and $y\geq 0$, the Dominated Convergence Theorem implies 
\begin{eqnarray*}
1\geq \lim_{T\rightarrow \infty} P[\varphi^\star(\cdot, T)] &=& \lim_{T\rightarrow \infty} \int_{X} d(x, q_x^{-1}(Q^{-1}(E(T)))) \pi(x)dx\\
&=& \int_{X}  \lim_{T\rightarrow \infty}  d(x, q_x^{-1}(Q^{-1}(E(T)))) \pi(x)dx\\
&=&  \int_{X}   d(x, q_x^{-1}(Q^{-1}(\lim_{T\rightarrow \infty}  E(T)))) \pi(x)dx\\
&=&1. 
\end{eqnarray*}
It follows that $\lim_{T\rightarrow \infty} P[\varphi^\star(\cdot, T)]=1$.
\end{proof}

To investigate Property~(ii) for an arbitrary continuous target distribution, we note that the posterior density $\pi_T$ of the target location given that the target has not been found by time $T$ can be written as 
\[
\pi_T(x)=\frac{[1-d(x, \varphi^\star(x, T))]\pi(x)}{\int_X [1-d(x, \varphi^\star(x, T))]\pi(x)dx}=\frac{[1-d(x, q_x^{-1}(Q^{-1}(E(T))) )]\pi(x)}{\int_X [1-d(x, q_x^{-1}(Q^{-1}(E(T))) )]\pi(x)dx}.
\]
The denominator of $\pi_T$ is constant because $x$ is integrated out. For the circular normal case, the condition $r\leq R(T)$ is equivalent to the condition $Q^{-1}(E(T))\leq q_x(0)$. Therefore, we need to examine two cases: (I)~$Q^{-1}(E(T))> q_x(0)$ and (II)~$Q^{-1}(E(T))\leq q_x(0)$.   Here $q_x(0)$ is the attractiveness of allocating a first bit of effort to location $x$, while $Q^{-1}(E(T))$ is the ``baseline'' of attractiveness to get into the allocation at time $T$. If $Q^{-1}(E(T))> q_x(0)$, allocation at $x$ never reaches this threshold,  so $x$ is never allocated any search effort before time $T$.  In Case~(I), the numerator of $\pi_T$ equals $[1-d(x, 0)]\pi(x)=\pi(x)$ if the detection function is regular.  Therefore, the numerator of $\pi_T$ is not constant  unless the target distribution is uniform.  For Case~(II),  the next theorem demonstrates that the numerator of $\pi_T$ (hence $\pi_T$ itself) is constant in the area $D_T=\{x\mid Q^{-1}(E(T))\leq q_x(0)\}$ if the detection function is exponential.

\begin{theorem}
\label{thm:3.2}
Suppose the target distribution is continuous with a probability density function $\pi$.
\begin{enumerate}
\item[(i)]If the detection function is of the form $d(x, y)=1-e^{-\alpha y}$ for all $x\in X$ and $y\geq 0$ where $\alpha>0$,  then the  posterior density $\pi_T$ is constant and equals $Q^{-1}(E(T))$ in the area $D_T=\{x\mid Q^{-1}(E(T))\leq q_x(0)\}$. 
\item[(ii)]If the detection function is regular and the target distribution is non-uniform,  then $\pi_T$ is not constant outside $D_T$. 
\end{enumerate}
\end{theorem}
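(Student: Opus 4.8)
The plan is to treat the two parts separately, since part~(i) is a direct computation that exploits the special algebraic structure of the exponential detection function, whereas part~(ii) follows from the complementary observation that no effort is ever allocated outside $D_T$. In both cases I would start from the explicit form of the posterior already recorded before the theorem, namely that the integrand in the numerator of $\pi_T$ at the point $x$ is $[1-d(x,\varphi^\star(x,T))]\pi(x)$ with $\varphi^\star(x,T)=q_x^{-1}(Q^{-1}(E(T)))$, and that the denominator is a fixed normalizing constant.

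For part~(i), I would first compute $q_x$ explicitly. With $d(x,y)=1-e^{-\alpha y}$ one has $\partial d(x,y)/\partial y=\alpha e^{-\alpha y}$, so that
\[
q_x(y)=\pi(x)\,\alpha e^{-\alpha y}=\alpha\,\pi(x)\,[1-d(x,y)].
\]
The decisive feature is that the exponential is precisely the detection function with constant hazard rate, so that $\partial d/\partial y$ is a fixed multiple of $1-d$; this is exactly what produces the cancellation. Writing $\lambda=Q^{-1}(E(T))$ and fixing $x\in D_T$, the defining inequality $\lambda\le q_x(0)$ means $q_x^{-1}(\lambda)$ is the genuine inverse, so $\varphi^\star(x,T)=q_x^{-1}(\lambda)$ satisfies $q_x(\varphi^\star(x,T))=\lambda$. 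Evaluating the displayed identity at $y=\varphi^\star(x,T)$ then yields
\[
\lambda=\alpha\,\pi(x)\,[1-d(x,\varphi^\star(x,T))],
\]
so the numerator integrand of $\pi_T$ is $[1-d(x,\varphi^\star(x,T))]\pi(x)=\lambda/\alpha$, free of $x$. Hence the numerator is constant on $D_T$, and since the denominator is a fixed constant, $\pi_T$ is constant there; this recovers the value $Q^{-1}(E(T))$ stated in the theorem in the normalized case $\alpha=1$.

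For part~(ii), I would use the complementary case. If $x\notin D_T$ then $\lambda>q_x(0)$, so by the definition of $q_x^{-1}$ we have $q_x^{-1}(\lambda)=0$ and therefore $\varphi^\star(x,T)=0$: no effort has been allocated at such $x$ by time $T$. Because $d$ is regular, $d(x,0)=0$, so the numerator integrand collapses to $[1-d(x,0)]\pi(x)=\pi(x)$. Consequently $\pi_T(x)$ is proportional to $\pi(x)$ throughout $X\setminus D_T$, and since $\pi$ is non-uniform this proportionality forces $\pi_T$ to be non-constant on that region.

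I expect the main obstacle to lie not in part~(i) — where the computation is forced once one notices the constant-hazard identity — but in making part~(ii) fully rigorous, namely in passing from ``$\pi$ is non-uniform on $X$'' to ``$\pi$ is non-constant on the specific set $X\setminus D_T$.'' Since $D_T=\{x:Q^{-1}(E(T))\le\pi(x)\,\partial_y d(x,0)\}$ is in general not a level set of $\pi$ (the factor $\partial_y d(x,0)$ may vary with $x$), one cannot immediately rule out $\pi$ being constant on $X\setminus D_T$ while varying elsewhere. The clean way to close this gap is a continuity argument at the boundary $\partial D_T$: in the exponential case $X\setminus D_T$ is the sublevel set $\{\pi<\lambda/\alpha\}$, and continuity of $\pi$ at $\partial D_T=\{\pi=\lambda/\alpha\}$ contradicts $\pi$ being identically equal to a value strictly below $\lambda/\alpha$ on the interior, so $\pi$ must genuinely vary outside $D_T$ whenever it is non-uniform.
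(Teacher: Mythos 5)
Your proof of part (i) is correct and is essentially the paper's own argument in a cleaner dress: the paper explicitly inverts $q_x$ (getting $q_x^{-1}(\lambda)=[-\ln\lambda+\ln(\alpha\pi(x))]/\alpha$ on $D_T$) and substitutes into $[1-d(x,\varphi^\star(x,T))]\pi(x)$ to obtain $Q^{-1}(E(T))/\alpha$, whereas you obtain the same constant from the identity $q_x(y)=\alpha\pi(x)[1-d(x,y)]$ together with $q_x(\varphi^\star(x,T))=\lambda$ on $D_T$ --- the same computation, no explicit inversion needed. One small point in your favor: what both you and the paper actually prove is that the \emph{numerator} of $\pi_T$ is the constant $Q^{-1}(E(T))/\alpha$; the value ``$Q^{-1}(E(T))$'' in the theorem statement is off by the factor $\alpha$ and by the normalizing denominator, a sloppiness of the paper's statement, not of your argument.

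For part (ii), your first argument ($\varphi^\star(x,T)=0$ off $D_T$ since $d$ is regular, hence $\pi_T\propto\pi$ there, hence non-constant because $\pi$ is non-uniform) is \emph{exactly} the paper's proof: the paper disposes of (ii) with the sentence ``The second statement follows from the above discussion,'' and that discussion makes precisely the leap you flagged, from ``$\pi$ is non-uniform on $X$'' to ``$\pi$ is non-constant on $X\setminus D_T$.'' So the gap you identified is a gap in the paper itself, and your scrutiny here is a genuine improvement. Two caveats on your proposed patch, though. First, the continuity/boundary argument needs $X$ to be connected (and $\pi$ to be continuous as a function): if $X$ has two separated components with $\pi$ equal to a different constant on each, then $\pi$ is non-uniform globally, yet for suitable $T$ one gets $D_T$ equal to the high-density component and $\pi_T$ genuinely constant on $X\setminus D_T=\{\pi<\lambda/\alpha\}$, so (ii) fails; connectedness is what makes the clopen set $\{\pi<\lambda/\alpha\}$ impossible as a nonempty proper subset. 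Second, for a general regular $d$ the difficulty you point out is not just a proof obstacle but a real one: choosing $g(x)=\partial_y d(x,0)$ appropriately (as in the paper's own remark with $d(x,y)=1-e^{-x^2y}$, which is of the form $1-e^{-g(x)y}$), one can arrange $X\setminus D_T=\{x\mid \pi(x)g(x)<\lambda\}$ to coincide with a region on which $\pi$ is constant even when $X$ is connected and $\pi$ is non-uniform, so statement (ii) as written for regular detection functions needs additional hypotheses. In short: your proposal matches the paper where the paper is right, and is more careful than the paper where the paper is careless.
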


\begin{proof}
The second statement follows from the above discussion.  Thus, it remains to establish the first statement.
In this case, we have $q_x(y)=\alpha \pi(x)e^{-\alpha y}$ and 
\[q_x^{-1}(\lambda)=\left\{
		                           \begin{array}{ll}
		                           [-\ln \lambda+\ln (\alpha \pi(x))]/\alpha,& \hbox{if $0<\lambda \leq q_x(0)$,} \\
					0, & \hbox{$\lambda>q_x(0)$.} 
		                          \end{array}
		                         \right.\]
Therefore,  the uniformly optimal plan is given by 
\[\varphi^\star(x, T)=\left\{
		                           \begin{array}{ll}
		                          [-\ln (Q^{-1}(E(T)))+\ln (\alpha \pi(x))]/\alpha,& \hbox{if $0<Q^{-1}(E(T)) \leq q_x(0)$,} \\
					0, & \hbox{$Q^{-1}(E(T))>q_x(0)$.} 
		                          \end{array}
		                         \right.\]
		                         If $0<Q^{-1}(E(T)) \leq q_x(0)$, the numerator of $\pi_T$ equals
\begin{eqnarray*}
[1-d(x, \varphi^\star(x, T))]\pi(x) 
&=& \exp\{\ln (Q^{-1}(E(T)))-\ln (\alpha \pi(x))\}\pi(x)\\
&=& Q^{-1}(E(T))/\alpha.
\end{eqnarray*}
\end{proof}
\noindent \textbf{Remark.} Statement~(i) cannot be extended to the case where the detection function is regular.  To see this,  let $d(x, y)=1-e^{-x^2y}$. Then $d(x, y)$ is regular, $q_x(y)= \pi(x) x^2 e^{-x^2y}$, and 
\[q_x^{-1}(\lambda)=\left\{
		                           \begin{array}{ll}
		                           -\frac{1}{x^2}\ln \left[\frac{\lambda}{x^2 \pi(x)}\right],& \hbox{if $0<\lambda \leq q_x(0)$,} \\
					0, & \hbox{otherwise.} 
		                          \end{array}
		                         \right.\]
Thus, when $x\in D_T$, the numerator of $\pi_T$ equals
\begin{eqnarray*}
1-\left[1- \exp\left \{-x^2\left(-\frac{1}{x^2}\ln \left[\frac{U^{-1}(E(T))}{x^2 \pi(x)}\right]\right) \right\}\pi(x)\right]
= U^{-1}(E(T))/x^2,
\end{eqnarray*}
which depends on $x$.\\

When the target distribution is circular normal, Property~(iii) shows that $\pi_T$  is flattening and spreading as $T$ increases and  $\lim_{T\rightarrow\infty}\pi_T(x)=0$ for all $x\in X$.   In fact,  this phenomenon also holds for any continuous target distribution.  Let $\phi$ denote the empty set and $\mu$ stand for the Lebesgue measure on $\mathbb{R}^n$. 

\begin{theorem}
\label{thm:3.3}
If the possibility area $X$ satisfies $\mu (X)=\infty$, the target distribution  is continuous with a probability density function $\pi$, and the detection function takes the form $d(x, y)=1-e^{-\alpha y}$ for all $x\in X$ and $y\geq 0$ where $\alpha>0$,  then $\pi_T$ is flattening and spreading as $T$ increases and $\lim_{T\rightarrow }\pi_T(x)=0$ for all $x\in X$. 
\end{theorem}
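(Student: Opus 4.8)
The plan is to read off the explicit shape of $\pi_T$ from Theorem~\ref{thm:3.2} and then convert the informal phrase ``flattening and spreading'' into precise assertions. Write $\lambda_T=Q^{-1}(E(T))$ and let $Z_T=\int_X[1-d(x,\varphi^\star(x,T))]\pi(x)\,dx$ be the (constant) normalizing denominator of $\pi_T$. By the computation in the proof of Theorem~\ref{thm:3.2}(i), the numerator of $\pi_T$ equals the constant $\lambda_T/\alpha$ on $D_T=\{x\in X:\pi(x)\ge\lambda_T/\alpha\}$ and equals $\pi(x)$ off $D_T$, so
\[
\pi_T(x)=\frac{\lambda_T/\alpha}{Z_T}\ \text{ on } D_T,\qquad \pi_T(x)=\frac{\pi(x)}{Z_T}\ \text{ off } D_T.
\]
Since $\pi(x)<\lambda_T/\alpha$ off $D_T$, the constant $h_T:=(\lambda_T/\alpha)/Z_T$ is the maximum value of $\pi_T$. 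I would then establish: (a) [spreading] $D_T$ is nondecreasing in $T$ with $\mu(D_T)\to\infty$; (b) [flattening] $h_T$ is nonincreasing in $T$ and $h_T\to0$; and (c) $\lim_{T\to\infty}\pi_T(x)=0$ for every $x\in X$. Throughout I use $E(T)\to\infty$, which holds since $E(T)=WvT$.

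For (a), because $E$ is increasing and $Q^{-1}$ is strictly decreasing, $\lambda_T$ is decreasing, so the super-level sets $D_T=\{\pi\ge\lambda_T/\alpha\}$ are nested increasingly. Invoking $\lim_{\lambda\to\infty}Q^{-1}(\lambda)=0$ (the fact from Section~2.2 of Stone (1975) already used in the proof of Theorem~\ref{thm:3.1}) together with $E(T)\to\infty$ gives $\lambda_T\to0$, whence $\bigcup_T D_T=\{x\in X:\pi(x)>0\}$. By continuity of $\pi$ this set has the same Lebesgue measure as its closure $X$ (the boundary being Lebesgue-null), so $\mu(D_T)\uparrow\mu(X)=\infty$ by continuity of measure from below.

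For (b) and (c) I would split $Z_T$ over $D_T$ and its complement,
\[
Z_T=\frac{\lambda_T}{\alpha}\,\mu(D_T)+\int_{D_T^{\,c}}\pi(x)\,dx\ \ge\ \frac{\lambda_T}{\alpha}\,\mu(D_T),
\]
so that $h_T\le 1/\mu(D_T)\to0$. Monotonicity of $h_T$ is equivalent to $Z_{T_1}/\lambda_{T_1}\le Z_{T_2}/\lambda_{T_2}$ for $T_1<T_2$; after inserting the displayed decomposition this reduces to controlling $\int_{D_{T_2}\setminus D_{T_1}}\pi$, and the crux is the elementary bound $\int_{D_{T_2}\setminus D_{T_1}}\pi\le \frac{\lambda_{T_1}}{\alpha}\,\mu(D_{T_2}\setminus D_{T_1})$ coming from $\pi<\lambda_{T_1}/\alpha$ on that annular set. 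Finally, for fixed $x$: if $\pi(x)>0$ then $x\in D_T$ for all large $T$ and $\pi_T(x)=h_T\to0$; if $\pi(x)=0$ then $x\notin D_T$ and $\pi_T(x)=\pi(x)/Z_T=0$ for every $T$. Either way $\pi_T(x)\to0$.

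The step I expect to be the main obstacle is (a): pinning down $\mu(D_T)\to\infty$ requires passing from the stated hypothesis $\mu(X)=\infty$ (on the support) to $\mu(\{\pi>0\})=\infty$, which needs the boundary $X\setminus\{\pi>0\}$ to be Lebesgue-null, and it also hinges on $\lambda_T\to0$, hence on $E(T)\to\infty$. The monotonicity computation in (b) is the other fiddly point, but it is purely elementary once the annulus bound above is in hand.
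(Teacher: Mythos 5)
Your proposal follows the same skeleton as the paper's proof: read off the piecewise form of $\pi_T$ from Theorem~\ref{thm:3.2}, note that the constant piece dominates, bound it by $1/\mu(D_T)$ via $Z_T\ge(\lambda_T/\alpha)\mu(D_T)$ (this inequality is exactly the paper's displayed estimate), and reduce everything to $\mu(D_T)\to\infty$ using $\lambda_T=Q^{-1}(E(T))\to 0$. You add three things the paper does not have. First, your handling of points outside $D_T$ (if $\pi(x)>0$ then $x\in D_T$ eventually, so $\pi_T(x)=h_T$; if $\pi(x)=0$ then $\pi_T(x)=0$ for all $T$) is cleaner than the paper's appeal to ``$D_T^c\to\phi$''. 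Second, your part (b) is a genuine addition: the paper never formalizes ``flattening'' as monotonicity of the peak height $h_T$; it only proves the pointwise limit. Your sketched computation does close: writing $Z_T/\lambda_T=\mu(D_T)/\alpha+\lambda_T^{-1}\int_{D_T^c}\pi$ and splitting $\int_{D_{T_1}^c}\pi$ over $D_{T_2}^c$ and the annulus $D_{T_2}\setminus D_{T_1}$, the three resulting terms are controlled by $\lambda_{T_2}\le\lambda_{T_1}$ and your annulus bound. Third, your route to $\mu(D_T)\to\infty$ (nested sets $D_T\uparrow\{\pi>0\}$ plus continuity of measure from below) is the right tool and is in fact sounder than the paper's own argument, which invokes $\lim_{\|x\|\to\infty}\pi(x)=0$ (not a consequence of $\int_X\pi=1$) and then bounds $\mu(D_T)$ \emph{from above} by a quantity tending to infinity, which proves nothing about $\mu(D_T)$ itself.

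That said, the step you flag as the main obstacle is a genuine gap, and your proposed fix fails: continuity of $\pi$ does not make $X\setminus\{\pi>0\}$ Lebesgue-null. Concretely, enumerate the rational points $\{q_i\}$ of $\mathbb{R}^2$, let $U=\bigcup_i B(q_i,2^{-i})$, and take $\pi$ proportional to the (continuous, bounded, integrable) function $x\mapsto\mathrm{dist}(x,U^c)$, so that $\pi$ is positive exactly on $U$. Then the support $X$ is all of $\mathbb{R}^2$, so $\mu(X)=\infty$, yet $\mu(\{\pi>0\})=\mu(U)<\infty$. In this situation your own estimates show the theorem's conclusion actually fails: $D_T\subseteq U$ gives $Z_T=(\lambda_T/\alpha)\mu(D_T)+\int_{D_T^c}\pi\le(\lambda_T/\alpha)\,2\mu(U)$, hence $h_T\ge 1/(2\mu(U))>0$ for all $T$, so $\pi_T(x)$ does not tend to $0$ at any $x$ with $\pi(x)>0$. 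So the hole is not in your bookkeeping but in the passage from the stated hypothesis $\mu(\mathrm{supp}\,\Pi)=\infty$ to what the argument actually needs, namely $\mu(\{x:\pi(x)>0\})=\infty$; the paper's proof crosses the same bridge with an even weaker argument. With the hypothesis strengthened to $\mu(\{\pi>0\})=\infty$ (or with additional regularity on $X$ forcing the boundary to be null), your steps (a), (b), (c) go through and yield a statement strictly stronger than the one the paper proves.
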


\begin{proof}
Put $D_T=\{x\in X\mid Q^{-1}(E(T))\leq q_x(0)\}$. Then the proof of Theorem~\ref{thm:3.2} implies $q_x(0)=\alpha\pi(x)$ and 
\begin{eqnarray}
\label{eq:post}
\pi_T(x) &=&  \frac{[1-d(x, \varphi^\star(x, T))]\pi(x)}{\int_X [1-d(x, \varphi^\star(x, T))]\pi(x)dx} 
	= \left\{
		                         \begin{array}{ll}
		                           \frac{Q^{-1}(E(T))/\alpha}{\int_X [1-d(x, \varphi^\star(x, T))]\pi(x)dx},& \hbox{if $x\in D_T$,} \\
					 \frac{\pi(x)}{\int_X [1-d(x, \varphi^\star(x, T))]\pi(x)dx}, & \hbox{$x\in D_T^c$.} 
		                          \end{array}
		                         \right.
\end{eqnarray}
Thus, when $x\in D_T^c$,  we have $Q^{-1}(E(T))/\alpha<\pi(x)$,  showing that the part of $\pi_T$ on $D_T$ is no greater than the part of $\pi_T$ on $D_T^c$.  Since $D^c_T=\{x\in X\mid Q^{-1}(E(T))> q_x(0)\}\rightarrow \phi$ as $T\rightarrow\infty$, it suffices to show that $\lim_{T\rightarrow \infty}\pi_T(x)=0$ for  $x\in D_T$.  It follows from the proof of Theorem~\ref{thm:3.2} that
\[[1-d(x, \varphi^\star(x, T))]\pi(x)=
\left\{
		                         \begin{array}{ll}
		                          Q^{-1}(E(T))/\alpha,& \hbox{if $x\in D_T$,} \\
					1, & \hbox{$x\in D_T^c$.} 
		                          \end{array}
		                         \right.
\]
Therefore, for $x\in D_T$,  we have 
\begin{eqnarray*}
0\leq \pi_T(x) 
&=& \frac{Q^{-1}(E(T))/\alpha}{\int_{D_T} Q^{-1}(E(T))/\alpha dx+ \int_{D_T^c}\pi(x)dx}\\
&\leq& \frac{Q^{-1}(E(T))}{\int_{D_T} Q^{-1}(E(T))dx}= \frac{Q^{-1}(E(T))}{Q^{-1}(E(T)) \int_{D_T} 1\ dx}= \frac{1}{\mu (D_T)}.
\end{eqnarray*}

To finish the proof, it remains to show that $\lim_{T\rightarrow \infty} \mu(D_T)=\infty$.  To this end, note that $Q^{-1}$ is continuous and strictly decreasing.  As $\lim_{\lambda\rightarrow \infty}Q(\lambda)=0$ (e.g.  Stone 1975, p.48), we know $\lim_{T\rightarrow \infty}Q^{-1}(E(T))=0$.  (This is consistent with the interpretation of $Q^{-1}(E(T))$ as the baseline of attractiveness  we mentioned before Theorem~\ref{thm:3.2}.) Since $\int_X\pi(x)dx=1<\infty$,  we have $\lim_{||x||\rightarrow \infty} \pi (x)=0$ where $||\cdot||$ is the Euclidean norm on $\mathbb{R}^n$.  Therefore, the continuity of the Lebegsue measure implies 
\begin{eqnarray*}
\lim_{T\rightarrow \infty} \mu (D_T) &\leq& \lim_{T\rightarrow \infty}  \mu (\{x\mid ||x||\leq \inf\{||a||\mid \pi(a)=Q^{-1}(E(T))\}\}) \\
&=&  \mu \left( \lim_{T\rightarrow \infty} \{x\mid ||x||\leq \inf\{||a||\mid \pi(a)=Q^{-1}(E(T))\}\} \right)\\
&=& \infty.
\end{eqnarray*}
\end{proof}
\noindent \textbf{Remark.} Theorem~\ref{thm:3.3} does not extend to a regular detection function. To see this, consider  the
example in Remark of Theorem~\ref{thm:3.2}.  In this case,  if $x\in D_T$, then
\begin{eqnarray*}
0\leq \pi_T(x) %&=& \frac{[1-d(x, \varphi^\star(x, T))]\pi(x)}{\int_X [1-d(x, \varphi^\star(x, T))]\pi(x)dx}\\
		     &=& \frac{Q^{-1}(M(T))/x^2}{\int_X Q^{-1}(M(T))/x^2 dx}=\frac{1/x^2}{\int_X 1/x^2 dx},\quad \text{for all $T>0$.  }
\end{eqnarray*}
Therefore, for any $T>0$ we have
\[\pi_T(x)=
\left\{
		                           \begin{array}{ll}
		                           0,& \hbox{if the the possibility area $X$ contains the origin,} \\
					 \text{a positive constant}, & \hbox{otherwise.} 
		                          \end{array}
		                         \right.
\]

Property~(iv) says that a uniformly optimal search plan will place the additional effort that accumulates in the interval $[T, T+t]$ uniformly over the area that has been searched by time $T$.  It turns out this property  holds for an arbitrary continuous target distribution too.

\begin{theorem}
\label{thm:3.4}
If the target distribution is  continuous with a density function $\pi$ and the detection function has the form $d(x, y)=1-e^{-\alpha y}$ for all $x\in X$ and $y\geq 0$ where $\alpha>0$,  then the  uniformly optimal search plan will place the additional effort uniformly over the area that has been searched by time $T$.
\end{theorem}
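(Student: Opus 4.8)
The plan is to exploit the explicit closed form of the uniformly optimal plan $\varphi^\star$ that was already derived inside the proof of Theorem~\ref{thm:3.2} for the exponential detection function, and then simply subtract the allocation at time $T$ from the allocation at time $T+t$. Recall that for $d(x,y)=1-e^{-\alpha y}$ we have $q_x(0)=\alpha\pi(x)$ and, on the set where the allocation is positive,
\[
\varphi^\star(x,S)=\frac{1}{\alpha}\bigl[\ln(\alpha\pi(x))-\ln(Q^{-1}(E(S)))\bigr].
\]
The region searched by time $T$ is exactly $D_T=\{x\mid Q^{-1}(E(T))\leq q_x(0)\}$, the set on which $\varphi^\star(x,T)>0$.

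First I would record that $D_T\subseteq D_{T+t}$ for every $t\geq 0$: since $E$ is increasing and $Q^{-1}$ is strictly decreasing (as established in the proof of Theorem~\ref{thm:3.1}), we have $Q^{-1}(E(T+t))\leq Q^{-1}(E(T))\leq q_x(0)$ for each $x\in D_T$. Hence for every $x\in D_T$ the positive branch of the closed form applies at \emph{both} times $S=T$ and $S=T+t$, and I may substitute it directly into the definition $\varphi^\star_T(x,t)=\varphi^\star(x,T+t)-\varphi^\star(x,T)$.

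Carrying out the subtraction, the term $\ln(\alpha\pi(x))$ occurs with opposite signs in the two allocations and cancels, leaving
\[
\varphi^\star_T(x,t)=\frac{1}{\alpha}\ln\!\left(\frac{Q^{-1}(E(T))}{Q^{-1}(E(T+t))}\right),\qquad x\in D_T,
\]
which is independent of $x$. Thus the additional effort accumulated over $[T,T+t]$ is one and the same constant at every point of the already-searched region $D_T$, which is precisely the assertion of the theorem.

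The only point requiring care — and the closest thing to an obstacle — is the verification that $D_T\subseteq D_{T+t}$, which is what guarantees that the positive branch of $\varphi^\star$ governs the allocation at both times for every $x\in D_T$; without it the expression for $\varphi^\star(x,T+t)$ used in the subtraction would not be the applicable one. This nesting is an immediate consequence of the monotonicity of $E$ together with the strict monotonicity of $Q^{-1}$, both already available from the earlier proofs, so no genuinely new estimate is needed. (Note that on the newly searched sliver $D_{T+t}\setminus D_T$ the additional effort does depend on $x$, but those points are not part of the area searched by time $T$, so this is fully consistent with the stated conclusion.)
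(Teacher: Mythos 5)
Your proposal is correct and follows essentially the same route as the paper: both substitute the closed form of $\varphi^\star$ derived in the proof of Theorem~\ref{thm:3.2} into $\varphi^\star_T(x,t)=\varphi^\star(x,T+t)-\varphi^\star(x,T)$ and observe that the $\ln(\alpha\pi(x))$ terms cancel on the already-searched region, leaving the constant $\frac{1}{\alpha}\ln\bigl(Q^{-1}(E(T))/Q^{-1}(E(T+t))\bigr)$. Your explicit verification of the nesting $D_T\subseteq D_{T+t}$ is a small point of added rigor that the paper leaves implicit in its three-case display, but it is not a different argument.
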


\begin{proof}
From the proof of Theorem~\ref{thm:3.2}, we know that the uniformly optimal search plan $\varphi^\star$ is given by 
\[\varphi^\star(x, T)=\left\{
		                           \begin{array}{ll}
		                           [-\ln (Q^{-1}(E(T)))+\ln (\alpha \pi(x))]/\alpha,& \hbox{if $0<Q^{-1}(E(T)) \leq q_x(0)$,} \\
					0, & \hbox{if $Q^{-1}(E(T))>q_x(0)$.} 
		                          \end{array}
		                         \right.\]
Let $\varphi^\star_T(x, t)=\varphi^\star(x, T+t)-\varphi^\star(x, T)$ for $x\in X$ and $t\geq 0$ be the additional effort density. It follows that $\varphi^\star_T(x, t)$ equals
\[\left\{
		                           \begin{array}{ll}
		                           -\frac{1}{\alpha} [\ln (Q^{-1}(E(T+t))/Q^{-1}(E(T)))],& \hbox{if $0<Q^{-1}(E(T+t))< Q^{-1}(E(T))\leq q_x(0)$,} \\
		                           \frac{1}{\alpha}[-\ln (Q^{-1}(E(T+t)))+\ln (\alpha \pi(x))],& \hbox{if $0<Q^{-1}(E(T+t)) \leq q_x(0) < Q^{-1}(E(T))$,} \\
					0, & \hbox{if $Q^{-1}(E(T+t))>q_x(0)$,} 
		                          \end{array}
		                         \right.\]
for all $t>0$. Therefore,  the theorem follows.
\end{proof}

\noindent \textbf{Remark.} When $d$ is regular,  the uniformly optimal search plan $\varphi^\star$ is given by 
\[\varphi^\star(x, T)=\left\{
		                           \begin{array}{ll}
		                           q_x^{-1}(Q^{-1}(E(T))),& \hbox{if $0<Q^{-1}(E(T)) \leq q_x(0)$,} \\
					0, & \hbox{if $Q^{-1}(E(T))>q_x(0)$,} 
		                          \end{array}
		                         \right.\]
%Let $\varphi^\star_T(x, t)=\varphi^\star(x, T+t)-\varphi^\star(x, T)$ for $x\in X$ and $t\geq 0$ be the additional effort density. 
and $\varphi^\star_T(x, t)$ equals
\[\left\{
		                           \begin{array}{ll}
		                           q_x^{-1}(Q^{-1}(E(T+t)))-q_x^{-1}(Q^{-1}(E(T))),& \hbox{if $0<Q^{-1}(E(T+t))< Q^{-1}(E(T))\leq q_x(0)$,} \\
		                           q_x^{-1}(Q^{-1}(E(T+t))),& \hbox{if $0<Q^{-1}(E(T+t)) \leq q_x(0) < Q^{-1}(E(T))$,} \\
					0, & \hbox{if $Q^{-1}(E(T+t))>q_x(0)$,} 
		                          \end{array}
		                         \right.\]
In generally, $q_x^{-1}(Q^{-1}(E(T+t)))-q_x^{-1}(Q^{-1}(E(T)))$ might depend on $x$.  For example, if we consider the example in Remark of Theorem~\ref{thm:3.2}, then $\varphi^\star_T(x, t)$ equals
\[\left\{
		                           \begin{array}{ll}
		                           -\frac{1}{x^2} [\ln (Q^{-1}(E(T+t))/Q^{-1}(E(T)))],& \hbox{if $0<Q^{-1}(E(T+t))< Q^{-1}(E(T))\leq q_x(0)$,} \\
		                           \frac{1}{x^2}[-\ln (Q^{-1}(E(T+t)))+\ln (x^2 \pi(x))],& \hbox{if $0<Q^{-1}(E(T+t)) \leq q_x(0) < Q^{-1}(E(T))$,} \\
					0, & \hbox{if $Q^{-1}(E(T+t))>q_x(0)$,} 
		                          \end{array}
		                         \right.\]
for all $t>0$.  Therefore, Theorem~\ref{thm:3.4} does not extend to the case of a regular detection function.  

\section{Example}
\begin{example}
\label{example1}
Suppose a search team is tasked with searching for a sunk ship.   It is known that the current has been drifting towards north or/and east since the ship sank.  The teams decides to choose the density of the target distribution to be
\[\pi(x)=\pi(x_1, x_2)=\left\{
		                           \begin{array}{ll}
		                           e^{-(x_1+x_2)},& \hbox{$x_1>0$ and $x_2>0$,} \\
					0, & \hbox{otherwise,} 
		                          \end{array}
		                         \right.\]
where $x=(x_1, x_2)\in X=\mathbb{R}^2$.  In addition, the team takes the detection function to be $d(x,  y)=1-e^{-y}$ for $x\in X$ and $y\geq 0$.   In this case,  $q_x(y)=\pi(x)e^{-y}$ and 
\[q_x^{-1}(\lambda)=\left\{
		                           \begin{array}{ll}
		                           -\ln \lambda-(x_1+x_2),& \hbox{if $0<\lambda\leq e^{-(x_1+x_2)}$,} \\
					0, & \hbox{otherwise,} 
		                          \end{array}
		                         \right.\]                    
Therefore, $Q(\lambda)=\int_0^{-\ln \lambda}\int_0^{(-\ln \lambda)-x_1} [(-\ln \lambda)-x_1-x_2]dx_1dx_2=-\frac{1}{6}(\ln \lambda)^3$ and the uniformly optimal search plan given by 
\begin{equation}
\label{eq:unif}
\varphi^\star((x_1, x_2), T)=\left\{
		                           \begin{array}{ll}
		                           (6WvT)^{1/3}-(x_1+x_2),& \hbox{if $0< x_1+x_2\leq (6WvT)^{1/3}$,} \\
					0, & \hbox{otherwise,} 
		                          \end{array}
		                         \right.
\end{equation}
where $T$ is the time available for search.  For  $W=1$, $v=5$, and two values of $T$, Figure~1 provides a 3-D plot of the uniformly optimal search plan $\varphi^\star$ given by (\ref{eq:unif}).

%\begin{figure}[h]
%\begin{center}
%\scalebox{0.68}{\includegraphics{perspective_example1}}
%\end{center}
%\caption{Figure~1: Perspective plot of the uniformly optimal search plan $\varphi^\star$ in Example~1 when $W=1$, $v=5$, and $T=20$.}
%\label{fig:persp}
%\end{figure}

\begin{figure}[h]
\begin{center}
\subfigure[$T=10$]{\scalebox{0.40}{\includegraphics{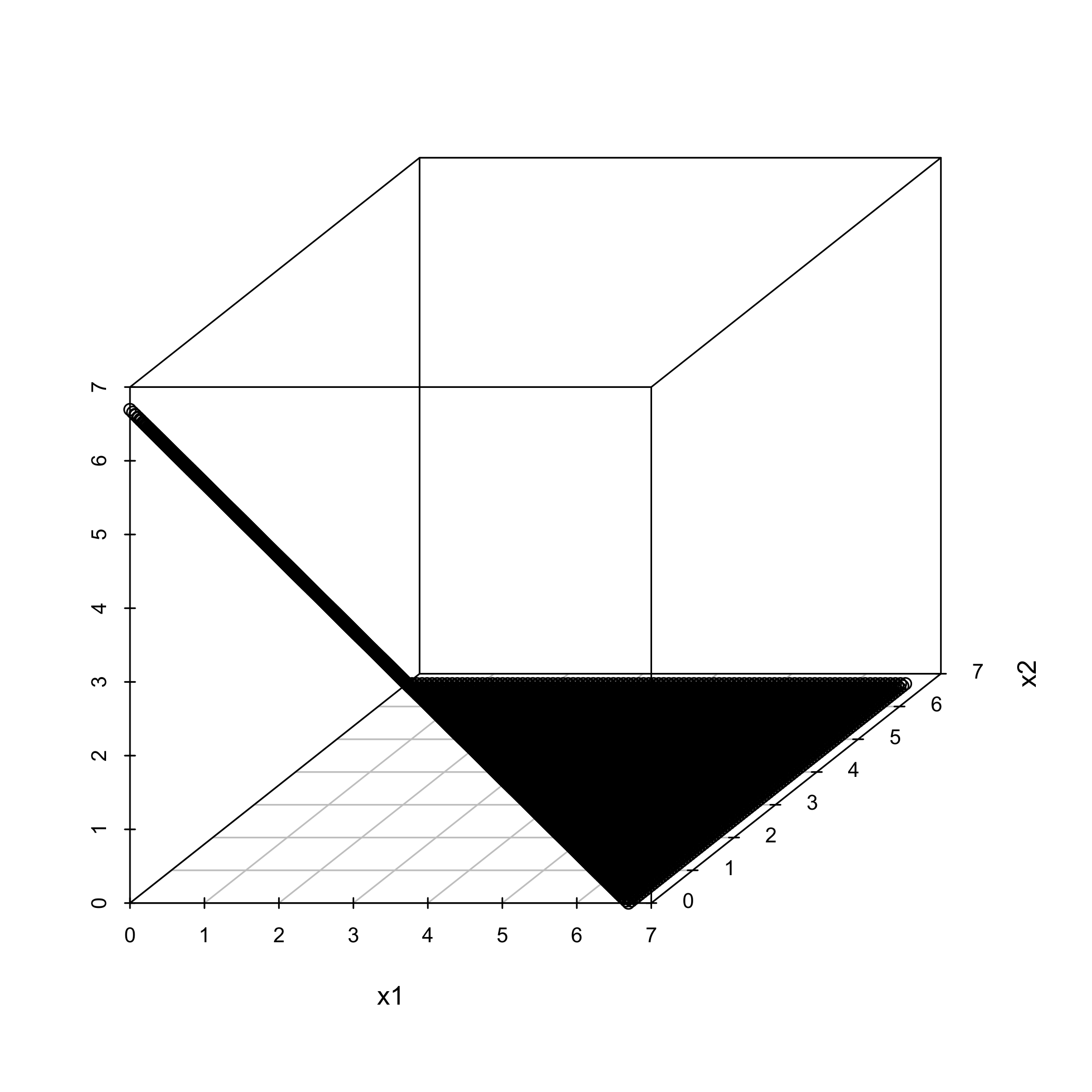}}}
%\subfigure[$T=30$]{\scalebox{0.40}{\includegraphics{Plot_T30}}}
\subfigure[$T=50$]{\scalebox{0.40}{\includegraphics{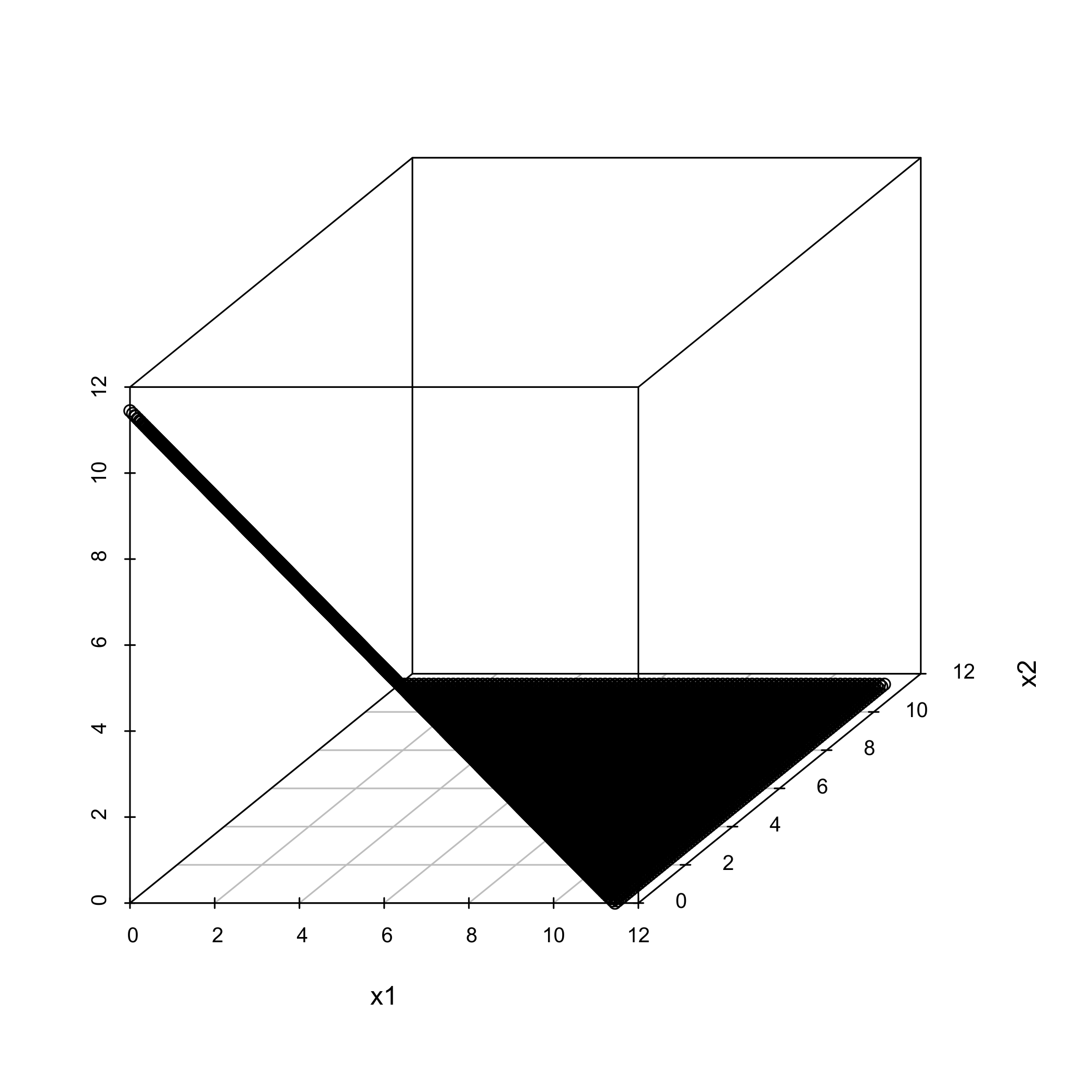}}}
%\subfigure[$T=80$]{\scalebox{0.40}{\includegraphics{Plot_T80}}}
\caption{Figure~1:\ 3-D plots of the uniformly optimal search plan $\varphi^\star$ in Example~1 when $W=1$ and $v=5$ for  $T=10$ and $T=50$.}
\label{fig:fitted}
\end{center}
\end{figure}

Put $D_T=\{(x_1, x_2)\mid x_1+x_2\leq (6WvT)^{1/3}\}$. Then the probability of detection equals
\begin{eqnarray*}
P[\varphi^\star(\cdot, T)] &=& \int_X d(x, \varphi^\star(x,T))\pi(x)dx\\
&=& \int_{D_t} [1-e^{(-(6WvT)^{1/3}+(x_1+x_2))}] e^{-(x_1+x_2)} dx_1 dx_2 \\
&=& 1-e^{-(6WvT)^{1/3}}-(6WvT)^{1/3}e^{-(6WvT)^{1/3}}-(1/2)(6WvT)^{2/3}e^{-(6WvT)^{2/3}}.
\end{eqnarray*}
Thus,  $\lim_{T\rightarrow \infty} P[\varphi^\star(x, T)] =1$.

By (\ref{eq:unif}), we know  $\pi(x)[1-d(x, \varphi^\star (x, T))]=e^{-(6Wvt)^{1/3}}$ for $x\in D_T$. Also,  $m(D_T)=(1/2)(6Wvt)^{2/3}$ and 
\begin{eqnarray*}
\int_{D^c_T} \pi(x)dx% &=& 1-\int_{D_T} \pi(x)dx\\
&=&1-\int_0^{(6Wvt)^{1/3}}\int_0^{(6Wvt)^{1/3}-x_1}e^{-(x_1+x_2)}dx_2 dx_1
=[1+(6Wvt)^{1/3}]e^{-(6Wvt)^{1/3}}.
\end{eqnarray*}
It follows from (\ref{eq:post}) that
\begin{eqnarray*}
%\label{eq:post}
\pi_T(x_1, x_2) &=& \left\{
		                         \begin{array}{ll}
		                          \frac{1}{1+(6WvT)^{1/3}+1/2(6WvT)^{2/3}},& \hbox{if $0< x_1+x_2\leq (6WvT)^{1/3}$,} \\
				        \frac{e^{-(x_1+x_2)}}{1+(6WvT)^{1/3}+1/2(6WvT)^{2/3}}, & \hbox{if $x_1+x_2>(6WvT)^{1/3}$.} 
		                          \end{array}
		                         \right.
\end{eqnarray*}
It is evident that $D^c_T=\{(x_1, x_2)\mid x_1+x_2> (6WvT)^{1/3}\}\rightarrow \phi$ as $T\rightarrow \infty$ and that the part of $\pi_T$ on $D_T^c$ is less than the part of $\pi_T$ on $D_T$. On the area $D_T$, the posterior density $\pi_T$ is constant and tends to zero as $T\rightarrow \infty$. Therefore, $\pi_T(x)\rightarrow 0$ as $T\rightarrow \infty$ for all $x\in X$.  Figure~2 provides 3-D plots of the posterior density $\pi_T$ for  $W=1$, $v=5$, and two values of $T$. As we can see, the posterior density $\pi_T$ is flattering and spreading as $T$ increases. 
\begin{figure}[h]
\begin{center}
\subfigure[$T=10$]{\scalebox{0.20}{\includegraphics{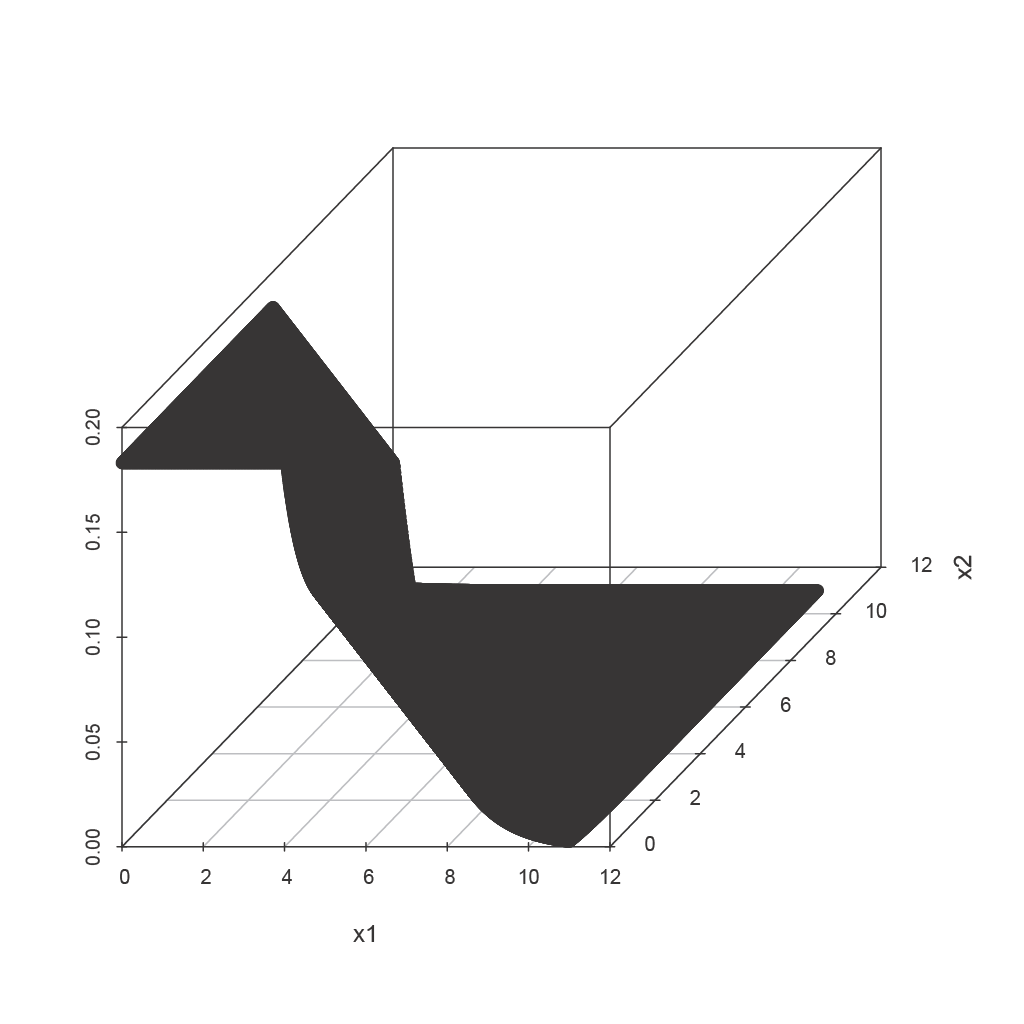}}}
%\subfigure[$T=30$]{\scalebox{0.40}{\includegraphics{post_T30}}}
\subfigure[$T=50$]{\scalebox{0.20}{\includegraphics{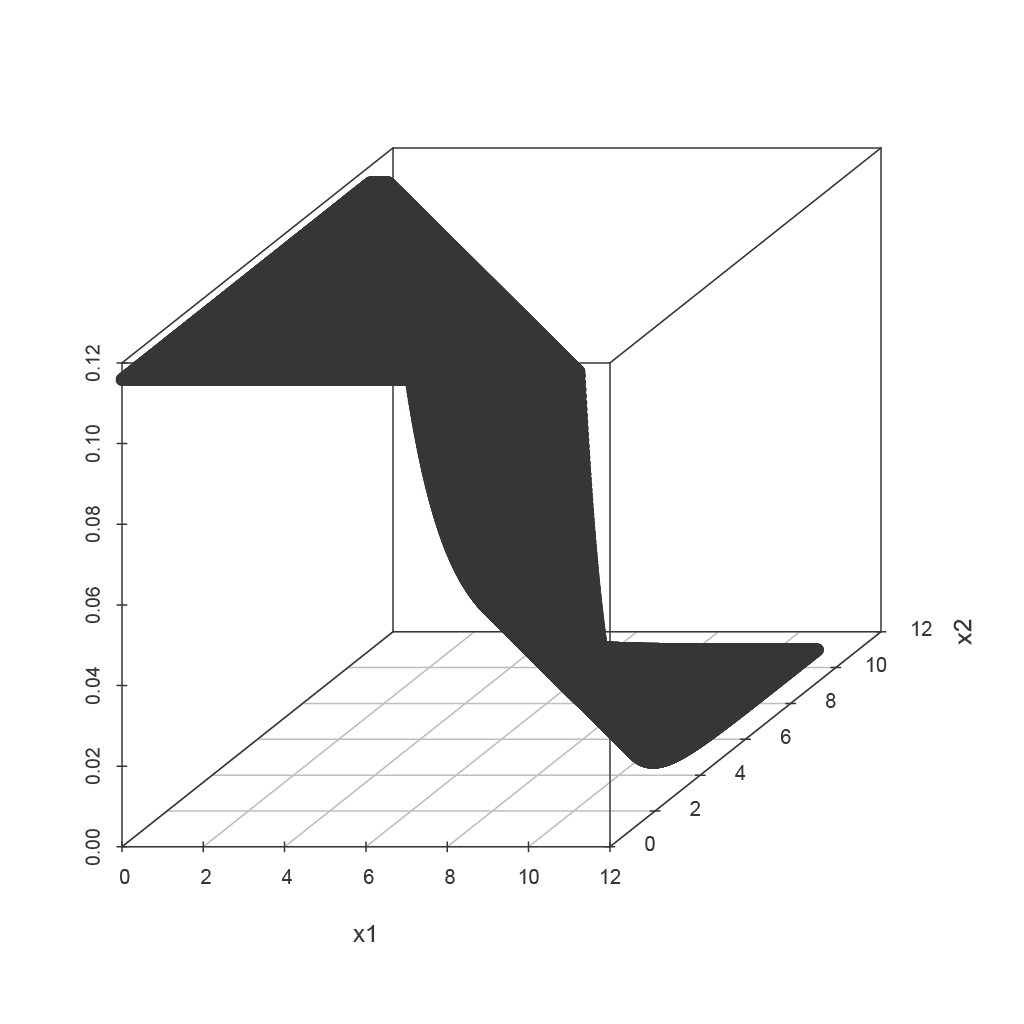}}}
%\subfigure[$T=80$]{\scalebox{0.40}{\includegraphics{post_T80}}}
\caption{Figure~2:\ 3-D plots of the posterior density $\pi_T$ outside $D_T$ in Example~1 when $W=1$ and $v=5$ for $T=10$ and $T=50$.}
\label{fig:fitted}
\end{center}
\end{figure}

In this case, the additional effort density equals
\[\varphi^\star_T((x_1, x_2), t)=\left\{
		                           \begin{array}{ll}
		                          \  [6Wv(T+t)]^{1/3}-(6WvT)^{1/3},& \hbox{if $0< x_1+x_2\leq (6WvT)^{1/3}$,  } \\
		                           \ [6Wv(T+t)]^{1/3}-(x_1+x_2),& \hbox{if $(6WvT)^{1/3}< x_1+x_2\leq [6Wv(T+t)]^{1/3}$,} \\
					\ 0, & \hbox{if $x_1+x_2> [6Wv(T+t)]^{1/3}$.} 
		                          \end{array}
		                         \right.\]
Therefore, $\varphi_T^\star$ is uniform over the area that has been searched by time $T$.
\end{example}

%\FloatBarrier

\section{Concluding remarks}
The uniformly optimal search plan plays a vital role in the theory of optimal search.  In the existing literature as well as many applications,  a circular normal target distribution is often employed for mathematical convenience.  In such a case,  a uniformly optimal search plan has been proved to possess several desirable properties.  To improve search efficiency and increase the probability of success, a search team should choose whatever target distribution that most closely reflects the prior information. Therefore, it is important to investigate whether these properties still hold for an arbitrary continuous target distribution. This article confirms that it is the case.

\section*{Acknowledgments}
I thank the Editor-in-Chief,  the Associate Editor, and two anonymous reviewers for many useful comments and suggestions.

\section*{References}

\begin{description}

%\item{} Alpern, S.~and Gal, S.~(2010). \emph{The Theory of Search Games and Rendezvous} Kluwer Academic Publisher: Norwell: MA. 

\item{} Arkin, V.I.~(1964). Uniformly optimal strategies in search problems. \emph{Theory of Probability and its Applications}~9(4): 647--677. 

\item{} Clarkson, J., Glazebrook, K.D.~and Lin, K.Y.~(2020). Fast or slow: search in discrete locations with two search models. \emph{Operations Research}~68(2), 552--571.

\item{} Everett, H.~(1963). Generalized Lagrange multiplier method for solving probglems of optimum allocation of resources. \emph{Operations Research}~11, 399--417.

\item{} Kadane, J.B.~(2015). Optimal discrete search with technological choice.  \emph{Mathematical Methods of Operations Research}~81, 317--336. 

\item{} Koopman, B.O.~(1946). Search an screening. \emph{Operations Evaluation Group Report No. 56 (unclassified).} Center for Naval Analysis, Rosslyn, Virginia. 

\item{} Koopman, B.O.~(1956a). The theory of search, I. Kinematic bases. \emph{Operations Research}~4, 324--346.

\item{} Koopman, B.O.~(1956b). The theory of search, II. Target detection. \emph{Operations Research}~4, 503--531.

\item{} Koopman, B.O.~(1956a). The theory of search, III The optimum distribution of seraching efforts. \emph{Operations Research}~4, 613--626. 

\item{} Kratzke, T.M., Stone, L.D., and Frost J.R.~(2010). Search and rescue optimal planning system.   \emph{Proceedings of the 13th International Conference on Information Fusion, Edinburgh, UK, July 2010}, 26--29.

\item{} Richardson, H.R.~and Stone, L.D.~(1971). Operations analysis during the underwater search for Scorpion. \emph{Naval Research Logistic Quarterly}~18, 141--157. 

\item{} Ricardson, H.R., Wagner D.H.~and Discenza, J.H.~(1980). The United States Coast Guard Computer-assisted Search Planning System (CASP). \emph{Naval Research Logistic Quarterly}~27,  659--680. 

\item{} Soza Co. Ltd and U.S. Coast Guard~(1996).  \emph{The Theory of Search: A Simplified Explanation}.  U.S. Coast Guard: Washington, D.C..

\item{} Stone, L.D.~(1973). Totally optimality of incrementally optimal allocations. \emph{Naval Research Logistics Quarterly}~20, 419--430.

\item{} Stone, L.D.~(1975). \emph{Theory of Optimal Search}.  Academic Press: New York.

\item{} Stone, L.D.~(1976). Incremental and total optimization of separable functionals with constraints. \emph{SIAm Journal on Control and Optimization}~14,  791--802.

\item{} Stone, L.D., Royset, J.O., and Washburn, A.R.~(2016). \emph{Optimal Search for Moving Targets}. Springer: New York.

\item{} Stone, L.D.~and Stanshine J.A.~(1971). Optimal searching using uninterrupted contact investigation.  \emph{SIAM Journal of Applied Mathematics}~20, 241--263.

\item{} Stone, L.D.~(1992). Search for the SS Central America: mathematical treasure hunting. \emph{Interfaces}~22: 32--54. 

\item{} Stone, L.D., Keller, C.M. , Kratzke, T.M.~and Strumpfer, J.P.~(2014). Search for the wreckage of Air France AF 447. \emph{Statistical Science}~29:69--80. 

%\item{} Wagner, D. H., Mylander, W.C. , and Sanders, T.J.~(1999). \emph{Naval Operations Analysis}.  Naval Institute Press: Annapolis, M.A.. 

\item{} Washburn, A.~(2014). \emph{Search and Detection}, 5th Edition. Create Space: North Carolina.

\end{description}

\end{document}